\title{Geometric Embedding of Path and Cycle Graphs in Pseudo-convex Polygons}
\author{Hamid Hoorfar\thanks{Department of Computer Engineering and Information Technology, Amirkabir University of Technology (Tehran Polytechnic), {\tt \{hoorfar,ar.bagheri\}@aut.ac.ir}} \and Alireza Bagheri\footnotemark[1]}
\begin{document}
\maketitle

\begin{abstract}
	
	Given a graph $ G $ with $ n $ vertices and a set $ S $ of $ n $ points in the plane, a point-set embedding of $ G $ on $ S $ is a planar drawing such that each vertex of $ G $ is mapped to a distinct point of $ S $. A straight-line point-set embedding is a point-set embedding with no edge bends or curves. The point-set embeddability problem is NP-complete, even when $ G $ is $ 2 $-connected and $ 2 $-outerplanar. It has been solved polynomially only for a few classes of planar graphs. Suppose that $ S $ is the set of vertices of a simple polygon. A straight-line polygon embedding of a graph is a straight-line point-set embedding of the graph onto the vertices of the polygon with no crossing between edges of graph and the edges of polygon. In this paper, we present $ O(n) $-time algorithms for polygon embedding of path and cycle graphs in simple convex polygon and same time algorithms for  polygon embedding of path and cycle graphs in a large type of simple polygons where $n$ is the number of vertices of the polygon.\\
	
\textbf{keywords:}
	graph drawing, point-set embedding, Halin graph, fan graph.
\end{abstract}

\section{Introduction}
Geometric embedding of graphs has wide applications in circuit schematics, algorithm animation, and software engineering\cite{di1994algorithms}. This problem has been studied extensively for planar graphs in various fields. Planar graph drawing is the problem of drawing a graph in the plane such that its edges intersect only at common vertices. It has been proved that any planar graph can be drawn without crossing, even when its edges are drawn as a straight-line segment joining its vertices.
 Schnyder\cite{schnyder1990embedding} showed that each plane graph with $n\ge 3$ vertices has a straight line embedding on the $\left(n-1\right)\times\left(n-1\right)$ grid. He proved that the embedding is computable in time $ O(n) $. In the point-set embeddability problem given a planar graph $ G $ and a set $ P $ of points in the plane, deciding whether there is a planar straight-line embedding of $ G $ such that the  vertices are embedded onto the points $ P $ is NP-complete, even when $ G $ is $ 2 $-connected and $ 2 $-outerplanar\cite{cabello2006planar}. NP-hardness  for $ 3 $-connected graphs was shown by Durocher and Mondal\cite{durocher2012hardness}. For outer-planar graphs was presented an $ O(n log^{3} n) $ time and $ O(n) $ space algorithm to compute a straight-line embedding of $ G $ in $ P $\cite{bose2002embedding,castaneda1996straight,pach1991embedding}.There is an $\Omega \left(n\mathrm{log}n\right)$ lower bound for the point-set embeddability problem\cite{bose1997optimal}. Several papers studied point-set embeddability for special graph classes. Nishat et al.\cite{nishat2012point}, Moosa and Rahman\cite{moosa2011improved} and Hosseinzadegan and Bagheri\cite{Hoseinzadegan} presented algorithms to test point-set embeddability of triangulated planar graphs of treewidth $ 3 $ and wheel graphs. Optimal embedding of a planar graph $ G $ means an embedding of $ G $ such that the total edge length of embedding is the minimum. Optimal embedding application is in the design of distributed computing networks, a special case of the problem is the Euclidean Travelling Salesman Problem (TSP), where graph $ G $ is a cycle of $ n $ nodes\cite{Hoseinzadegan}. 
 There is an algorithm for embedding rooted trees and degree-constrained trees in arbitrary point set, computable in $ O(n \log n) $-time\cite{ikebe1994rooted}. A wheel graph $ W_{n} $ is a graph with $ n $ vertices, formed by connecting a single vertex to all vertices of an $ (n-1) $-cycle.
 The single central vertex will be referred to as \textit{hub} and others vertices as \textit{rim} vertices\cite{yang2009beyond}. Chambers et al\cite{chambers2010drawing} describe an algorithm for drawing a planar graph with a prescribed outer face
 shape. The input consists of an embedded planar graph $ G $, a partition of the outer face of the embedding into a set $ S $ of $ k $ chord-free paths, and a $ k $-sided polygon $ P $; the output of their algorithm is a drawing of $ G $ within $ P $ with each path in $ S $ drawn along an edge of $ P $. In our problem, we are not allowed to draw any edge of $ G $ along an edge of $ P $, even with intersection and we should not use any point, except vertices of $ P $, for graph embedding. In similar problem by\cite{mchedlidze2013drawing}. given a planar graph $ G $ with a fixed planar embedding and a simple cycle $ C $ in $ G $ whose vertices are in convex position, they studied the question whether this drawing can be extended to a planar straight-line drawing of $ G $. they characterize when this is possible in terms of simple necessary and sufficient conditions. For this purpose, they described a linear-time testing algorithm. Hoseinzadegan and Bagheri\cite{Hoseinzadegan} consider optimal embedding of wheel graphs and a sub-class of 3-trees, that are not outer-planar. they presented optimal $ O(n \log n) $-time algorithms for embedding and $ O(n^{2}) $-time algorithms for optimal embedding of wheel graphs. Wheel graphs are a sub-class of Halin graphs. A Halin graph $ H $,also known as a roofless polyhedron, is obtained by a planar drawing of a tree having four or more vertices, having no nodes of degree $ 2 $ in the plane, and then connecting all leaves of the tree with a cycle $ C $ which passes around the tree's boundary in such a way that the resulting graph is planar. Halin graphs are edge $ 3 $-connected and Hamiltonian\cite{cornuejols1983halin}.
 \begin{figure}
 \centering
 \includegraphics[width=0.7\linewidth]{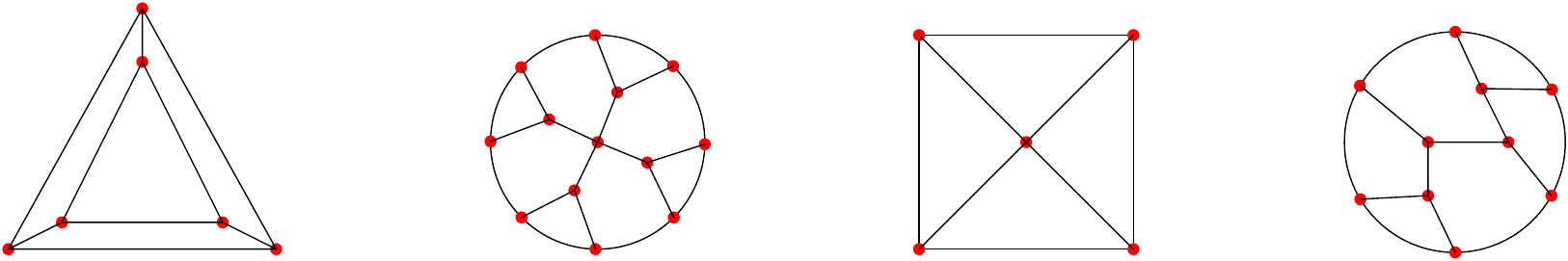}
 \caption{{\small Some examples of Halin graphs}}
 \label{fig:figure02}
 \end{figure}
 A gear graph is obtained from the wheel $ W_{n} $ by adding a vertex between every pair of adjacent vertices of the $ n $-cycle\cite{gallian2014dynamic}. In 2010, Bagheri and Razzazi\cite{bagheri2010planar} showed that the problem of deciding whether there is a planar straight-line point-set embedding of an $ n $-node tree $ T $ on a set $ P $ of $ n $ points in the plane that includes a partial embedding $ E $ of $ T $ on $ P $ is NP-complete. In another paper\cite{sepehri2010point}, Sepehri and Bagheri presented an algorithm for embedding a tree with $ n $ vertices on a set of $ n $ points inside a simple polygon with $ m $ vertices so that number of bends is minimum. They showed that time complexity of the presented algorithm is $ O(n^{2}m+n^{4}) $. A geometric graph H is a graph $ G(H) $ together with an injective mapping of its vertices into the plane. An edge of the graph is drawn as a straight-line segment joining its vertices. We use $  V(H) $ for the set of points where the vertices of $  G(H) $ are mapped to, and we do not make a distinction between the edges of $ G(H) $ and $  H $. A planar geometric graph is a geometric graph such that its edges intersect only at common vertices. In this case, we say that H is a geometric planar embedding of $ G(H) $. In the next section, we present some definitions and preliminaries that that will be used throughout the paper.
 
\section{Preliminaries}
We use basic notions of graph drawing that is used in~\cite{di2009point}. Let $ G = ( V , E ) $ be a simple graph with $ n $ vertices in the node set $ V $ and $ E $ be the edge set of $ G $. A $ planar graph $ is a graph with at least
a drawing without edge crossing except at the nodes, where the edges are incident. A \textit{plane graph} is a planar graph with a fixed embedding on the plane. let $ S $ be a set of $ n $ points in the plane. A point-set embedding of $ G $ onto $ S $ , denoted as $\varGamma ( G , S ) $, is a planar drawing of $ G $ such that each vertex is mapped to a distinct point of $ S $ . $ \varGamma ( G , S ) $ is called a geometric (straight-line) point-set embedding if each edge is drawn as a straight-line segment. Let $ D ( S ) $ be a straight-line drawing whose vertices are points of a subset of $ S $. An optimal point-set embedding of a planar graph $ G $ is a point-set embedding of $ G $, where the total edge length
(or the area) of the embedding is minimized.  A path graph $ P_{n} $ is a tree with $ n-2 $ nodes of degree $ 2 $ and two nodes of degree one. A cycle graph $ C_{n} $, known as an $ n $-cycle is a graph with $ n $ nodes containing a single cycle through all nodes. In cycle graph $ C_{n} $, every nodes are of degree $ 2 $. A complete graph $ K_{n} $ is a graph with $ n $ nodes and $ \frac{n\left(n-1\right)}{2} $ edges, in which each pair of graph vertices is connected by an edge. A complete graph that is embedded in an other graph is called clique. A wheel graph $ W_{n} $, is a planar graph with at least $ 3 $ nodes, such that a certain node is connected to all other nodes of a $ (n-1) $-cycle. Wheel graph is extended to Halin graph. A Halin graph (so called roofless polyhedron) is a planar graph formed of a tree $ T $, without any $ 2 $-degree nodes, and a cycle $ C $ connecting leaves of $ T $ in the cyclic order determined by a plane embedding of $ T $. Therefore, A Halin Graph $ H_{n,m} $ has $ n+m $ nodes, $ n $ of them belongs to tree $ T $ and $ m $ of them belongs to cycle $ C $. If $ T $ be a star, Halin graph becomes a wheel graph. A fan graph $ F_{n,m} $ is defined as a path graph $ P_{m} $ and empty graph $ E_{n} $ with $ n $ nodes such that every node belongs $ E_{n} $ is connected to every node belong $ P_{m} $ with additional edges. The convex hull of a point set $ S $ is a minimum area convex polygon $ CH(S) $ that is contain all points of $ P $. Convex hull $ CH(S) $ is a partition of set $ S $ into two subsets are named \textit{inner points} and \textit{outer points}. The points that is placed on boundary of $ CH(S) $ belong to outer points set and other points that is placed inside $ CH(S) $(without boundary) belong to inner points set, denoted as $ S_{out} $, $ S_{in} $, respectively. It is obvious to get the following result, $\left|S\right|\ge\left|{S}_{out}\right|\ge 3$. Convex hull of a point set with $ n $ points can be compute in $O\left(n\mathrm{log}n\right)$ time~\cite{de2000computational}.
We introduce new version of graph drawing, named as \textit{polygon embedding} . Let $ G=(V,E) $ be a simple graph with $ n $ vertices in the node set $ V $ and $ E $ be the edge set of $ G $. Furthermore, Let $ P=(\varSigma,\varPi) $ be a simple polygon with $m(\ge n)$ vertices in the vertex set $ \varSigma $ and $ \varPi $ be the edge set of $ P $. A \textit{polygon embedding} of $ G $ on $ P $ is drawing of $ G $ such that each vertex of $ G $ mapped to a distinct vertex of $ P $ without edge crossing between edges in $ E$ and edges in $\Pi$ except at the nodes, where the edges are incident and $ E\cap \Pi =\phi $. For each edge $ e $ of $ G $, must be $ e\subseteq P $. See figure~\ref{fig:figure04}. A polygon embedding of $ G $ into $ P $ is named \textit{planar} such that drawing be without edge crossing between edges of graph $ G $ and called a \textit{geometric (straight-line) polygon embedding} if each edge is drawn as a straight-line segment.
\begin{figure}
 \centering
 \includegraphics[width=0.5\linewidth]{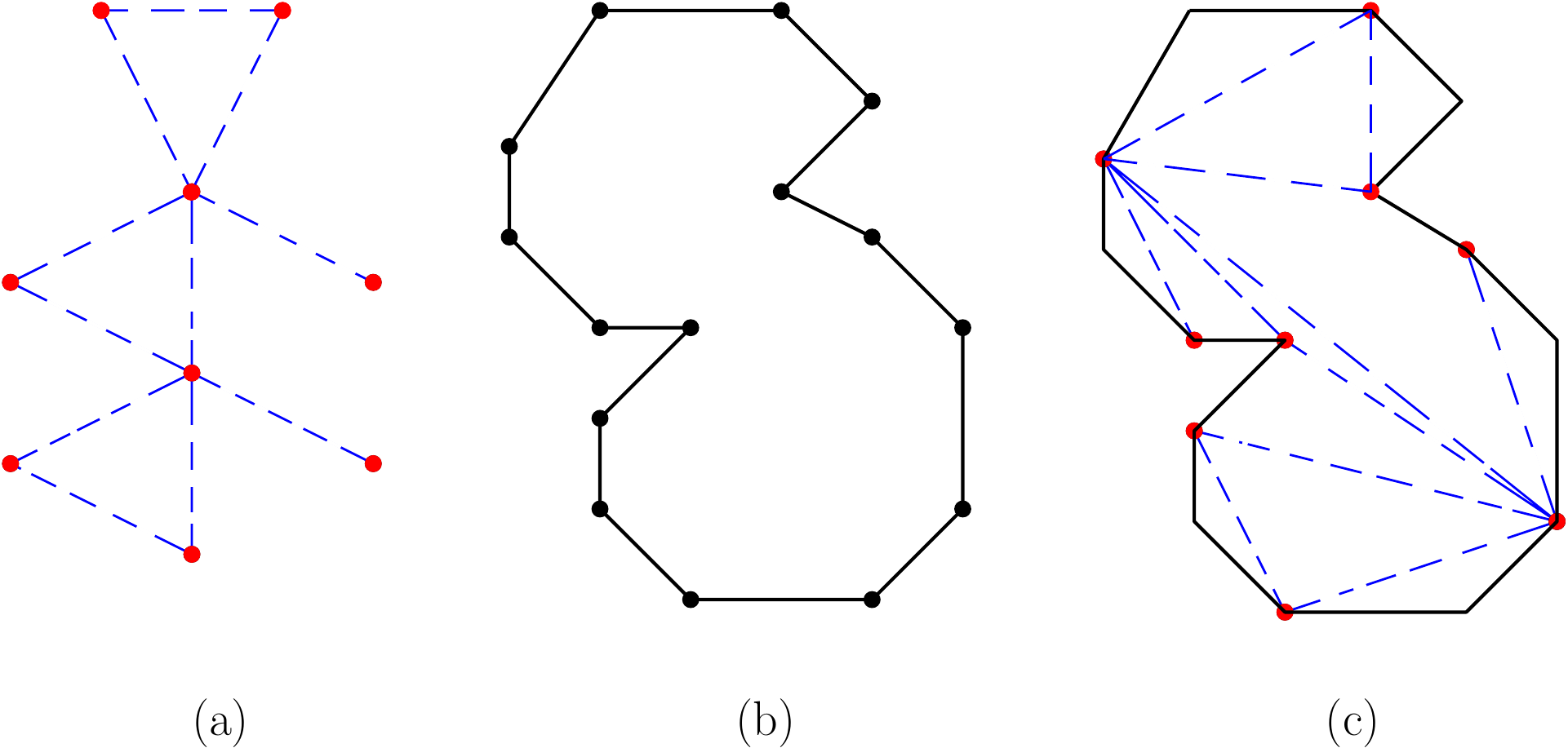}
 \caption{\small{ A \textit{polygon embedding} of $ G $ into $ P $}}
 \label{fig:figure04}
 \end{figure}
Polygon embedding is a special case of point set embedding with some additional constraints which make problem more complex than before. For example, connecting between some pair of points in $ S $ (vertices of polygon) is not allowed. In many cases, polygon embedding is possible if number of vertices of $ P $ be greater than number of vertices of $ G $. Therefore, always mapping must be one-to-one but sometimes not onto.
There is an $ \varOmega(n\log n) $ lower bound for the point-set embeddability problem. For path and cycle graphs was presented an $ \varTheta(n\log n) $ time and $ O(n) $ space algorithm to compute a straight-line embedding in point set $ S $. We will review optimum-time algorithms to embedding these two kinds of graphs in a given point set in the next section. After that, we will present algorithms for straight-line polygon embedding of path and cycle graphs in polygon $ P $ and compute upper bound on maximum edges of graph that can be embeddable in $ P $. A convex polygon is a polygon with all its interior angles less than $ \pi $. A concave polygon is a polygon that is not convex. A simple polygon is concave if at least one of its internal angles is greater than $ \pi $. An orthogonal polygon is one whose edges are all aligned with a pair of orthogonal coordinate axes, which we take to be horizontal and vertical without loss of generality. Thus, the edges alternate between horizontal and vertical, and always meet orthogonally, with internal angles of either $ \frac{\pi }{2} $ or $ \frac{3\pi }{2} $~\cite{o1987art}.  An orthogonal polygon is defined \textit{orthoconvex} if the intersection of the polygon with a horizontal or a vertical line is a single line segment~\cite{nandy2010recognition}. Orthoconvex with $ n\geqslant 6 $ vertices is a concave polygon. Every orthogonal polygon with $ n $ vertices has $ \frac{n-4}{2} $ angles of $ \frac{3\pi }{2} $ and $ \frac{n+4}{2} $ angles of $ \frac{\pi }{2} $, exactly. 
Let $ p $ and $ q $ are two points in polygon $ P $. $ p $ and $ q $ are said \textit{visible} to each other, if the line segment that joins them does not intersect any edge of $ P $. So, $ p $ and $ q $ are said to be \textit{invisible} to each other, if they are not visible. A \textit{visibility graph} is a graph for polygon $ P $, as denoted $ G_{v}^{P}(V,E) $, that each node in $ V $ represents a point location in $ P $, and there is an edge between two nodes $ v_{i} $ and $ v_{j} $ if they are visible from each other. A visibility graph $ G_{v}^{P}(V,E) $ is named \textit{polygon visibility graph}, if $ V$ be vertex set of $ P $. Note that in the whole of paper, size of graph means number of its edges.

\section{Preliminary Algorithms and Results}
\label{s:s03}
In the following, we review a exact $ O(n \log n) $-algorithm for embedding of path graph $ P_{n} $ onto point set $ S $ with $ n $ points which is always possible. First, order point set $ S $ according to $ X $-axes in list $ L=\left[{p}_{0},{p}_{1},{p}_{2,\dots },{p}_{n}\right] $, ascending. Then, connect points according to this ordering successive. There is a similar $ O(n \log n) $-algorithm for embedding of cycle graph $ C_{n} $ onto point set $ S $, as following:  
\begin{enumerate}
  \item Order point set $ S $ according to $ X $-axes in list $ L=\left[{p}_{0},{p}_{1},{p}_{2,\dots },{p}_{n-1}\right] $, ascending.
  \item Make list $ L_{down}=\left[{d}_{0},{d}_{1},{d}_{2,\dots },{d}_{\frac{n}{2}-1}\right] $ such that $ {d}_{i}=\underset{y}{\mathrm{min}}\left\{{p}_{2i},{p}_{2i-1}|i< \frac{n}{2},i\in N\right\} $.  
  \item Make list $ L_{up}=\left[{u}_{0},{u}_{1},{u}_{2,\dots },{u}_{\frac{n}{2}-1}\right] $ such that $ {u}_{i}=\underset{y}{\mathrm{max}}\left\{{p}_{2i},{p}_{2i-1}|i< \frac{n}{2},i\in N\right\} $.
  \item Connect points in each list according to their ordering successive. 
  \item Connect $ d_{0} $ to $ u_{0} $ and $ d_{\frac{n}{2}-1} $ to $ u_{\frac{n}{2}-1} $.
\end{enumerate}
 \begin{figure}
  \centering
  \includegraphics[width=0.6\linewidth]{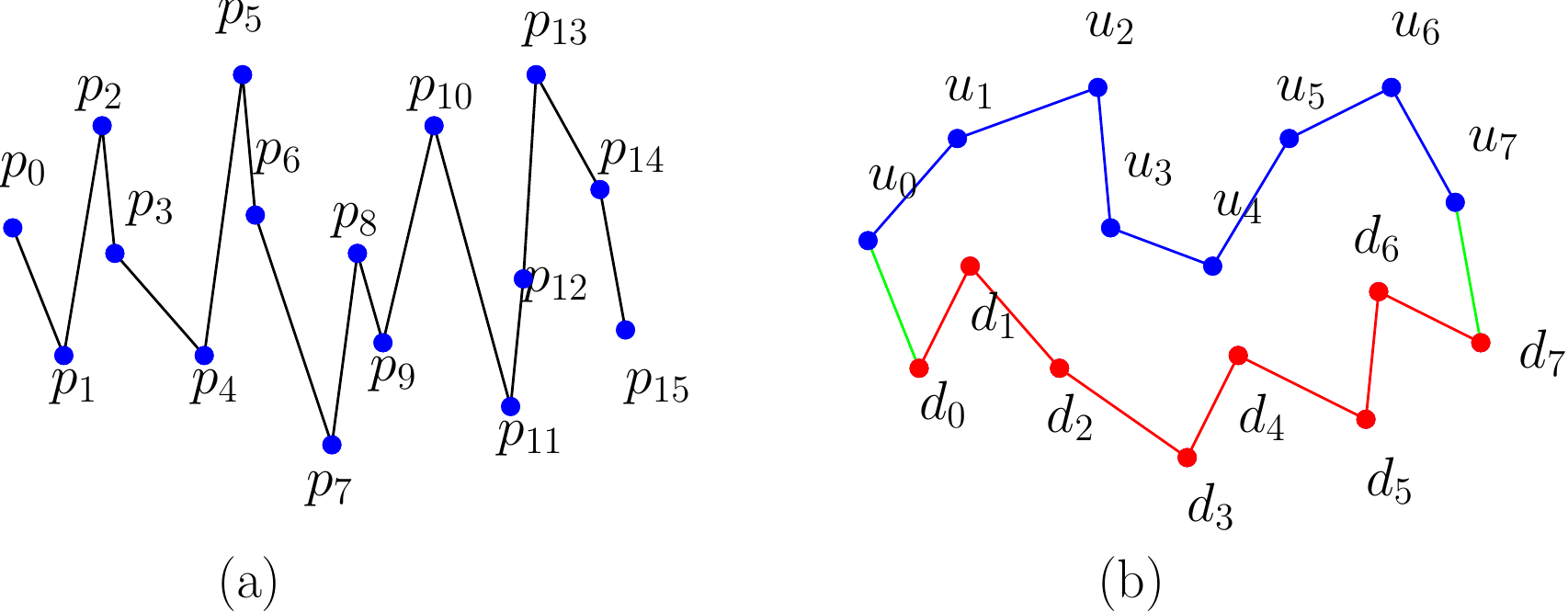}
  \caption{{\small (a)Embedding of path graph and (b)cycle graph in the point set.}}
  \label{fig:figure05}
  \end{figure}
See figure~\ref{fig:figure05}. These graph embeddings are not unique and optimum. Time complexity of both algorithms is equal to time complexity of sorting algorithm that puts elements of list in a certain order. Therefore, time complexity is $ \varTheta(n \log n) $ and it is tight. Maximum path(cycle)that can be draw in point set of size $ n $ is $ n $, but where we want to embed a path (cycle) graph in polygon with $ n $ vertices, maximum size is not $ n $. We show that maximum size of path graph that can be embedded planar in polygon with $ n $ vertices is not greater than $ n-3 $ and for cycle is not greater than $ \lfloor\frac{n}{2}\rfloor $, where polygon is convex. Therefore, we have the following obviously results. Given a polygon $ P $ with $ n $ vertices, maximum size of path graph that can be embedded planar in $ P $ is lesser than or equal to $ n-3 $. Edges of path graph that can be embedded planar in $ P $ must be chords of $ P $ without crossing except at their end points. Maximum number of chords in a polygon with $ n $ vertices without crossing is $ n-3 $. Therefore, maximum size of path graph that can be embedded planar in $ P $ can not be greater than $ n-3 $. Also, maximum size of every graphs that can be embedded planar in polygon $ P $ with $ n $ vertices is $ n-3 $. By the way, given a polygon $ P $ with $ n $ vertices, maximum size of cycle graph that can be embedded planar in $ P $ is  lesser than or equal to $ \lfloor\frac{n}{2}\rfloor $. Let cycle graph $ C $ is embedded in $ P $ and $ \{c_{i}|0\leq i \leq m\} $ is set of its nodes, ordered counter clockwise. Each node $ c_{i} $ of $ C $ is degree two and must be connected to two different other nodes {\small (as named $ c_{i-1} $ and $ c_{i+1} $)} which are vertices of $ P $. Let nodes $ c_{i-1}$, $c_{i} $ and $ c_{i+1} $ of cycle are mapped to vertices $ p_{j}$, $ p_{k} $ and $ p_{l} $ in $ P $ {\small (ordered counter clockwise)}. Therefore, ($ p_{j}$,$ p_{k} $) and ($ p_{k} $,$ p_{l} $) can not  be edges of $ P $ because edges of cycle are not allowed to be edges of $P$. Cycle $ C $ is planar, so, any other nodes of $ C $ can not mapped to vertices of $ P $ between $ p_{j}$, $ p_{k} $ {\small (as ordered, $ {p}_{j+1 },\dots,{p}_{k-1} $)} and between $ p_{k} $, $ p_{l} $ {\small (as ordered, $ {p}_{k+1 },\dots,{p}_{l-1} $}), see figure~\ref{fig:figure06}. Suppose node $ {c}_{f} $ is mapped to $ {p}_{g} $ and {\small $ j<g<k $}, so, edge ($ c_{f-1}$,$ c_{f} $) and ($ c_{f} $,$ c_{f+1} $) of $ C $ must cross edge ($ c_{i-1} $,$ c_{i} $) and It is contradiction. In the other words, if nodes $ c_{i-1}$, $c_{i} $ and $ c_{i+1} $ of cycle are mapped to vertices $ p_{j}$, $ p_{k} $ and $ p_{l} $ in $ P $, consecutively, then vertices $ {p}_{j+1 },\dots,{p}_{k-1} $ and $ {p}_{k+1 },\dots,{p}_{l-1} $ are out of reach to be in cycle $ C $. ($ p_{j}$,$ p_{k} $) and ($ p_{k} $,$ p_{l} $) are not adjacent in $ P $. Therefore, at least one vertex between $ p_{j}$ and $ p_{k} $ and one vertex between $ p_{k} $ and $ p_{l} $ are out of reach to be in planar cycle $ C $. It happens for every $ c_{i}$, {\small $0\leq i \leq m $}. Hence, for every two successive nodes in $ S $ at least one vertex of $ P $ must be blocked and never can be mapped to any node of $ C $. Maximum size of $ C $ is lesser than or equal to $ \lfloor\frac{n}{2}\rfloor $.
 \begin{figure}
  \centering
  \includegraphics[width=0.3\linewidth]{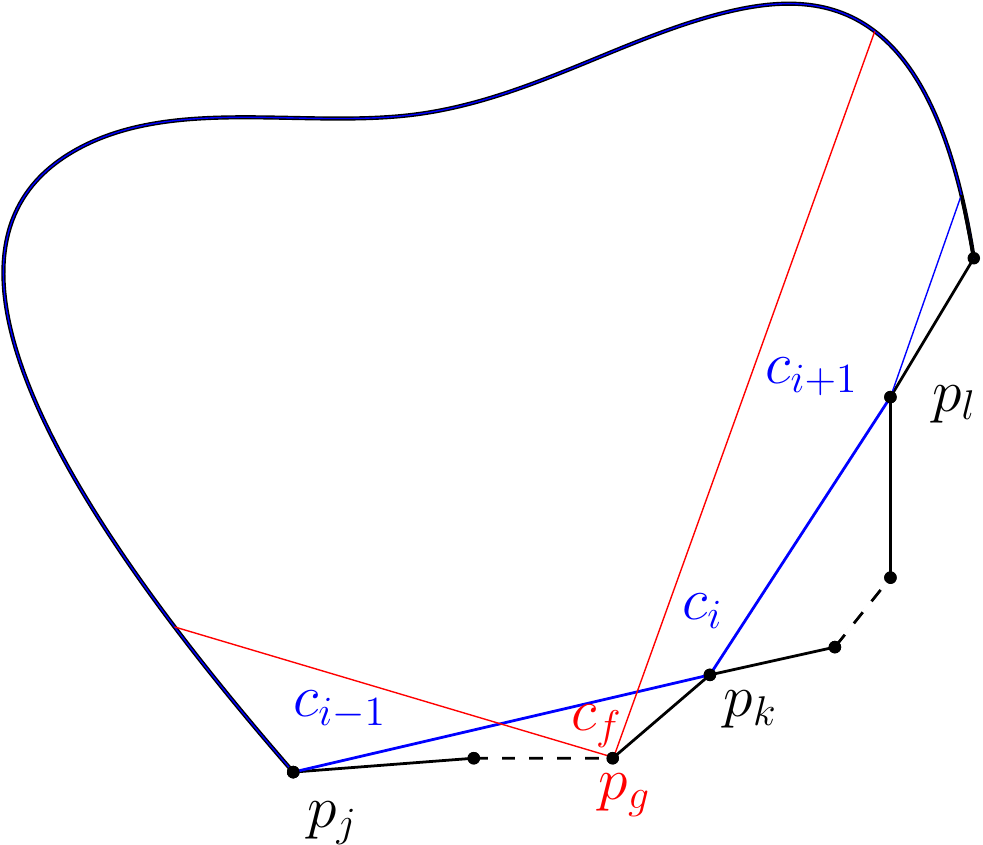}
  \caption{{\small Embedding of cycle graph in the polygon.}}
  \label{fig:figure06}
  \end{figure}
Also, maximum clique that can be embedding in polygon $ P $ with $ n $ vertices such that nodes of clique are mapped to vertices of $ P $ is $ K_{\lfloor \frac{n}{2}\rfloor}$, where $ P $ is convex. There is a linear-time reduction from sorting problem to point-set embeddability problem that be used to show the second problem is at least as difficult as the first. For time complexity, $ \varOmega(n\log n) $ is lower bound for the point-set embeddability problem. There are several algorithms for embedding trees in arbitrary point set, computable in $ O(n \log n)$-time\cite{ikebe1994rooted}. All of them using a sorting algorithm on point set and ordering it corresponding to a axes line or angular. In the polygon embeddability problem, given point set is placed on vertices set. The vertices of a polygon have a kind of ordering by itself. We study the following lemmas. 
\begin{lemma}
	\label{le:le03}
	There is a linear-time algorithm for straight-line planar embedding of a path graph $ P_{m} $ into convex polygon $ Q_{m+3} $ such that each node of $ P_{m} $ mapped to a vertex of $ C $.
\end{lemma}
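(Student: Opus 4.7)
The plan is to construct an explicit zigzag sequence of $m$ vertices of $Q_{m+3}$ and embed the path $P_m$ onto it in order. Labelling the vertices of the convex polygon $q_0,q_1,\ldots,q_{m+2}$ counterclockwise, I would output
\[
\pi \;=\; (q_0,\,q_2,\,q_{m+2},\,q_3,\,q_{m+1},\,q_4,\,q_m,\,q_5,\,q_{m-1},\,\ldots),
\]
truncated to its first $m$ entries, and map the $i$-th vertex of $P_m$ to the $i$-th entry of $\pi$. In particular $q_1$ is always skipped, together with two further vertices that the zigzag never reaches; this exhausts the three ``free'' vertices granted by the hypothesis $|V(Q_{m+3})|=m+3$.

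Correctness splits into two checks. First, no edge of the drawing coincides with a side of $Q_{m+3}$: the opening edge $(q_0,q_2)$ skips $q_1$, and every subsequent edge joins a vertex in the low-index arc $q_3,q_4,\ldots$ with a vertex in the high-index arc $q_{m+2},q_{m+1},\ldots$, so its endpoints are at cyclic distance at least $3$. Second, no two edges of the path cross one another: indexing the chords in the order the zigzag produces them, each new chord has both endpoints strictly inside the open arc spanned by the previous jump, so the corresponding cyclic-index intervals form a strictly nested chain (consecutive edges of course also share an endpoint). Two chords of a convex polygon fail to cross whenever their endpoint intervals are nested, disjoint, or share an endpoint, so the whole drawing is planar. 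Because every chord of a convex polygon lies in the closed polygonal region and meets the boundary only at its endpoints, the path also does not cross any side of $Q_{m+3}$.

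Producing $\pi$ takes $O(m)$ time, since each entry is a constant-time arithmetic function of its position, which gives the claimed linear runtime. The main obstacle I anticipate is the bookkeeping for the non-crossing check: the nesting claim is conceptually clean but must be stated carefully enough to cover all pair types (low/low, low/high, high/high, and pairs sharing an endpoint). I would handle this by writing the $k$-th low-side chord and $k$-th high-side chord in closed form and verifying the interval containments directly, which reduces to a couple of routine inequalities.
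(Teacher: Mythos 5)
Your construction is correct and is essentially the paper's own proof: both embed the path by a zigzag traversal that first makes one short skip (you skip $q_1$; the paper skips $q_{m+1}$) and then alternates between the two ends of the remaining arc, so that every drawn edge is a chord (never a polygon side) and the chords form a nested family, all in $O(m)$ time. Your nested-intervals justification of planarity is in fact more explicit than the paper's, which only appeals to mutual visibility of vertices; the only cosmetic difference is that you place $m$ nodes (matching the paper's definition of $P_m$) while the paper's traversal places $m+1$ nodes to realize $m$ edges, and your zigzag extends by one step to cover that reading as well.
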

\begin{proof}
\label{pr:pr03}
Let $ \{q_{0},q_{1},\dots,q_{m+2}\} $ be the set of vertices of $ Q $ ordered counter clockwise. Traverse vertices of $ Q $ in following order:
\begin{enumerate}
\item Traverse from $ {q}_{m+2} $ to $ {q}_{m} $.
\item Traverse from $ {q}_{m-i} $ to $ {q}_{i} $ {\small (for $ 0\leq i<\frac{m}{2} $)} if current location index is greater than $ \frac{m}{2} $.
\item Traverse from $ {q}_{i} $ to $ {q}_{m-i-1} $ {\small (for $ 0\leq i< \frac{m}{2}-1 $)} if current location index is lesser than $ \frac{m}{2} $.
  \end{enumerate}
 Because of convexity, vertices of convex polygon are visible from each others. Therefore, according to algorithm, path $ P $ has exactly $ m $ edges which are all planar chords of $ Q $. Time complexity is $ \theta \left(m\right) $, see figure~\ref{fig:figure07}(a).
\begin{figure}
 \centering
 \includegraphics[width=0.5\linewidth]{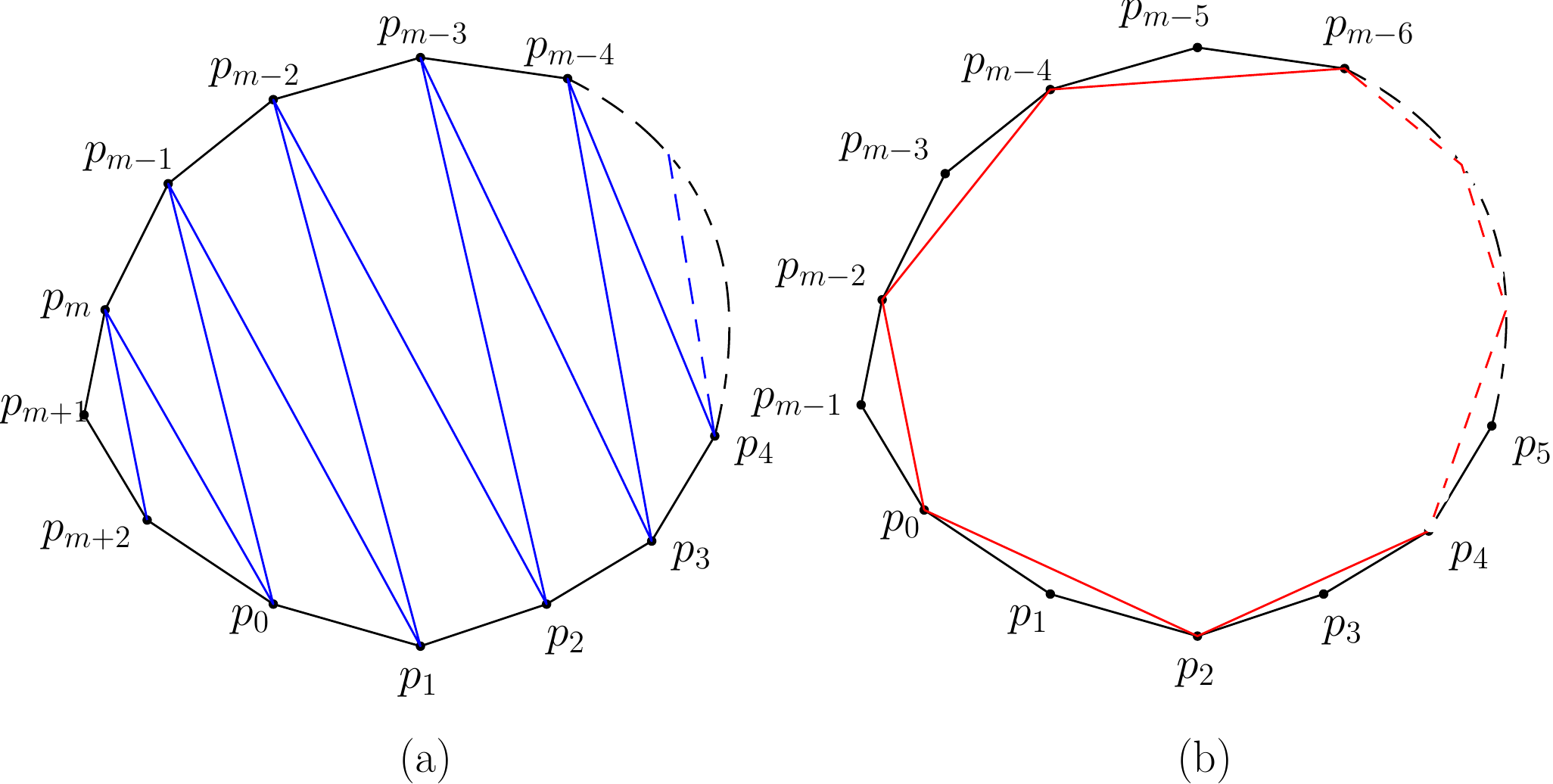}
 \caption{{\small Embedding of maximum (a)path and (b)cycle in convex polygon.}}
 \label{fig:figure07}
 \end{figure}
\end{proof}
There is an $ \varOmega(n) $ lower bound on time complexity for the polygon embeddability problem(in comparison with lower bound for point set embeddability that was $ \varOmega(n \log n) $ is better).
\begin{lemma}
	\label{le:le04}
	There is a linear-time algorithm for straight-line planar embedding of a cycle graph $ C_{\lfloor \frac{m}{2}\rfloor} $ into convex polygon $ P_{m} $ such that each node of $ C_{m} $ mapped to a vertex of $ P $.
\end{lemma}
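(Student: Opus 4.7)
The plan is to exploit the fact, already proved earlier in the excerpt, that the maximum cycle size embeddable in a polygon with $m$ vertices is at most $\lfloor m/2 \rfloor$, and show this bound is tight for convex polygons by giving a concrete construction that simply selects every other vertex of $P_m$.

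Concretely, let $\{p_0,p_1,\dots,p_{m-1}\}$ be the vertices of $P_{m}$ in counter-clockwise order, and let $k=\lfloor m/2\rfloor$. I would map the $i$-th node of $C_k$ to $p_{2i}$ for $0\le i\le k-1$, and draw the edges $(p_{2i},p_{2i+2})$ for $0\le i\le k-2$ together with the closing edge $(p_{2(k-1)},p_0)$. Building this mapping and writing down the edges is clearly $\Theta(m)$, matching the $\Omega(n)$ lower bound mentioned above.

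The correctness argument splits into three checks. First, because $P_m$ is convex, every pair of its vertices is mutually visible, so every edge of the cycle drawn as a straight segment is a legal chord of $P_m$. Second, no edge of the cycle coincides with an edge of $P_m$: for each chosen consecutive pair $p_{2i}, p_{2i+2}$ there is exactly one polygon vertex $p_{2i+1}$ strictly between them on the boundary, so $(p_{2i},p_{2i+2})\notin \varPi$; the closing edge $(p_{2(k-1)},p_0)$ skips at least the vertex $p_{m-1}$ (and also $p_{2k-1}$ when $m$ is odd), so it is likewise a proper chord. Third, the drawing is planar: the chosen points are in convex position (they are a subset of the vertices of a convex polygon) and are connected in the cyclic order in which they appear on that boundary, hence the cycle is drawn as the boundary of a convex $k$-gon whose edges only meet at shared endpoints.

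The only subtle point, and what I expect to be the main nuisance, is the parity of $m$. When $m$ is even, the construction uses exactly $m/2$ vertices and skips exactly one polygon vertex between each pair of consecutive cycle nodes, so the argument is symmetric. When $m$ is odd, we use $(m-1)/2$ vertices and the closing chord skips two polygon vertices instead of one; the visibility and planarity arguments still go through verbatim by convexity, but one must state this case explicitly to confirm that the achieved cycle size equals $\lfloor m/2\rfloor$ and not $\lfloor m/2\rfloor-1$. A picture (already provided as Figure~\ref{fig:figure07}(b)) makes both cases immediate, and together with Lemma~\ref{le:le03} this establishes that both bounds derived in the preliminary section are tight for convex polygons.
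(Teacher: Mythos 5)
Your construction is exactly the one the paper uses: traverse $p_{2i}\to p_{2i+2}$ for $0\le i<\lfloor m/2\rfloor-1$ and close the cycle back to $p_{0}$, with correctness following from mutual visibility of vertices in a convex polygon. Your write-up is in fact somewhat more careful than the paper's, since you explicitly check that no cycle edge coincides with a polygon edge and you treat the odd-$m$ parity case, but the approach is the same.
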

\begin{proof}
There are similarities between this proof and previous one. Let $ \{p_{0},p_{1},\dots,p_{m-1}\} $ be vertices set of $ P_{m} $ ordered counter clockwise. Tour vertices of $ P_{m} $ in following order:
\begin{enumerate}
\item Traverse from $ {p}_{2i} $ to $ {p}_{2i+2} $, {\small for $ 0\le i <\lfloor\frac{m}{2}\rfloor-1 $}.
\item Turn back from $ {p}_{2\lfloor\frac{m}{2}\rfloor-2} $ to $ {p}_{0}$.
  \end{enumerate}
 Because of convexity, vertices of convex polygon are visible from each others. Therefore, according to algorithm, $ C $ has exactly $ \lfloor\frac{m}{2}\rfloor $ edges which are all planar chords of $ P $. Time complexity is $ \theta \left(m\right) $, see figure~\ref{fig:figure07}(b).
\end{proof}
Our algorithms for embedding path and cycle use segments from the polygon chords to avoid intersections between embedded edges. 

\section{Embedding a cycle graph with maximum size in pseudo-convex polygon}
In this section, we first definite \textit{pseudo-convex} polygon and provide a linear-time algorithm for straight-line planar embedding of maximum cycle $ C_{max} $ in a pseudo-convex polygon $ P $ with $ n $ vertices. If polygon $ P $ is not convex, then there are some pairs of vertices that is not visible from each others. Therefore, let $ (v_{i},v_{j}) $ be an invisible pair of vertices, no edge of graph can be mapped to it. In the other words, finding maximum cycle in $ P $ is reduced to finding maximum cycle of polygon visibility graph of $ P $. Visibility graph of a polygon computes in time of $ O(n^{2}) $. In addition, finding maximum cycle in a graph is known as NP-hard. 
\begin{defini}
A reflex vertex in polygon $ P $ that is adjacent with another reflex is named \textit{u-turn vertex} and an edge in $ P $ that its both endpoints are u-turn vertices is named \textit{u-turn edge}, see figure~\ref{fig:figure08}(a).
\end{defini}
\begin{defini}
Let $ P $ be a polygon and $ V=\left\{{p}_{0},{p}_{1},{p}_{2},\dots ,{p}_{n-1}\right\} $ be vertex set of $ P $ ordered counter clockwise, if there is no successive reflex vertices in $ V $, $ P $ is named \textit{pseudo-convex polygon}. In the other word, $ P $ is pseudo-convex polygon if it has no u-turn vertex (edge), see figure~\ref{fig:figure08}(b). 
\end{defini}
\begin{defini}
Let $ P $ be a polygon, $ V=\left\{{p}_{0},{p}_{1},{p}_{2},\dots ,{p}_{n-1}\right\} $ be vertex set of $ P $ ordered counter clockwise and $ S_{i} $ be the set of vertices that are visible from $ {p}_{i} $. $ {p}_{i} $ is named \textit{isolated vertex}, if it has one of the following conditions:
\begin{itemize}
\item Cardinality of $ S_{i} $ be lesser than five (note that $ {p}_{i}\in S_{i} $). \item Cardinality of $ S_{i} $ be equal to five and both members of $ S_{i}-\{p_{i-1},p_{i},p_{i+1}\} $ be adjacent.
\end{itemize} 
\end{defini}
 \begin{figure}
  \centering
  \includegraphics[width=0.7\linewidth]{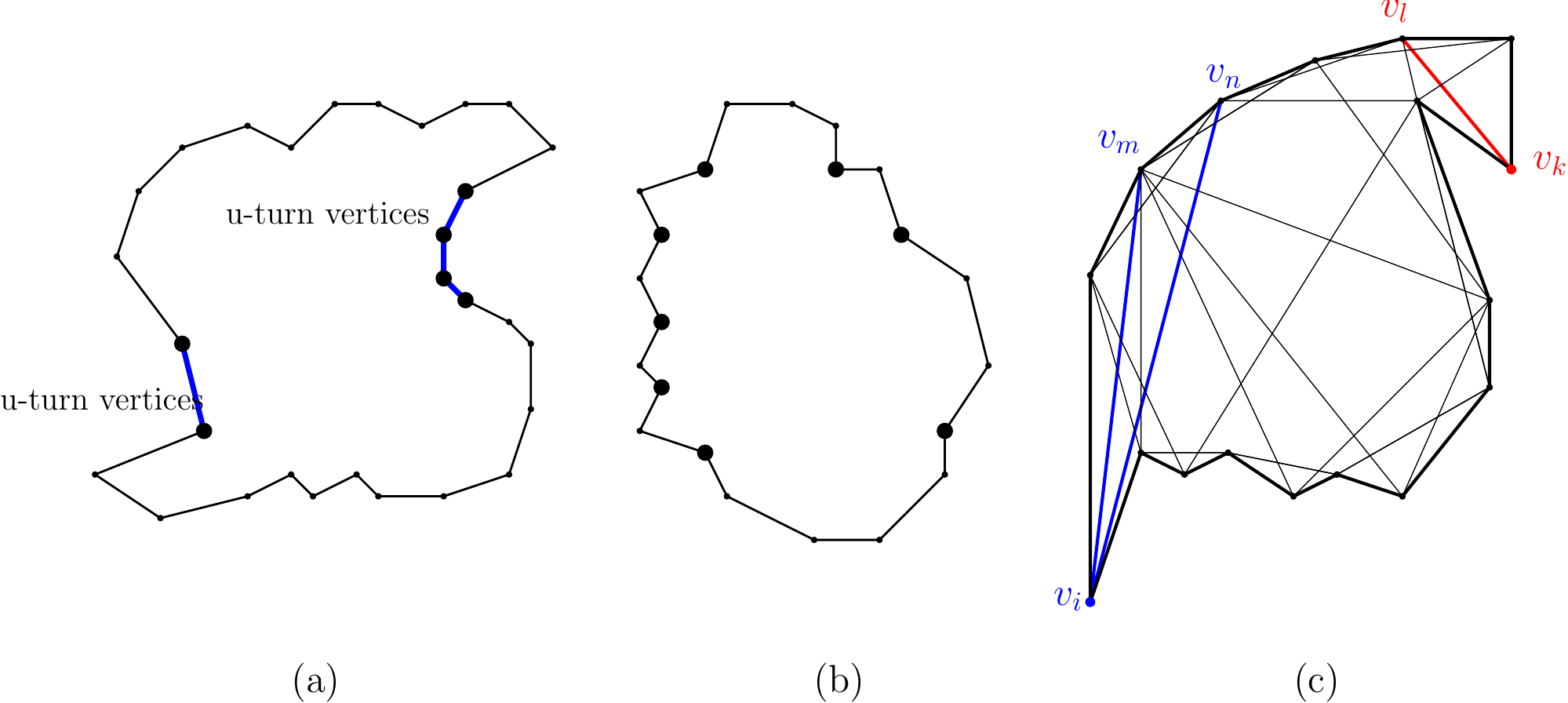}
  \caption{{\small (a)A polygon with u-turn vertices and edges: bold disks are u-turned vertices and fat segment are u-turned edges. (b)A pseudo-convex polygon: bold disks are reflex vertices. There are no successive reflex vertices. (c)Vertices $v_{i}$ and $v_{k}$ can not be a part of an embedded cycle graph but other vertices can be  at least in one cycle those are drawn by the cycles.}}
  \label{fig:figure08}
  \end{figure}
See figure~\ref{fig:figure08}(c), $ v_{i} $ and $ v_{k} $ are isolated vertices. Convex and orthoconvex polygon are two examples for pseudo-convex polygon. If a polygon has a u-turn vertex, actually, it has two successive u-turn vertices or more. In the following, we provide a linear-time algorithm for straight-line embedding of maximum cycle graph $ C $ in a pseudo-convex polygon $ P $. A vertex $ v_{i} $ of $ P $ can be in the cycle, if it is visible from at least two different vertices which are not adjacent to $ v_{i} $. Because the degree of every node $ c_{j} $ in the cycle $ C $ is two, that means, it must be connect to two others nodes $ c_{j-1} $ and $ c_{j+1} $. Therefore, if $ c_{j} $ is mapped to $ v_{i} $, $ P $ must have two chords from $ v_{i} $ to other vertices which the edges must be mapped to. Otherwise straight-line embedding is not possible by this mapping. Hence, a mapped vertex $ v_{i} $ in every embedded cycle must be visible from at least two other non-adjacent vertices $ v_{s} $ and $ v_{t} $ of $ P $. In addition, if $ v_{s} $ and $ v_{t} $ are adjacent, being part of a cycle is not still possible. There is no isolated vertex of $ P $ in cycle graph $ C $. See figure~\ref{fig:figure08}(c).\\
In the following, we want to study straight-line planar embedding of a cycle graph with maximum size in pseudo-convex polygon. Let $ C_{m} $ be a cycle graph with maximum number of edges that is embedded in pseudo-convex polygon $ P_{n} $ with $ n $ vertices. Let $ \{p_{0},p_{1},\dots,p_{n-1}\} $ be vertex set of $ P_{n} $ ordered counter clockwise and $ \{c_{0},c_{1},\dots,c_{n-1}\} $ be node set of $ C_{m} $ ordered counter clockwise. If node $ c_{i} $ is mapped to vertex $ p_{j} $, then $ p_{j-1} $ and $ p_{j+1} $ can not be nodes in $ C $. If node $ c_{i} $ is mapped to vertex $ p_{j} $ and vertex $ p_{j+1} $ is a reflex vertices, then it is better that $ c_{i} $ be mapped to vertex $ p_{j+1} $ instead of $ p_{j}$, and size of $ C $ will not change and still remain maximum. We must probe this explained claim. Consequently, we prove the following lemmas.
\begin{lemma}
Let $ p_{j-1} $, $ p_{j} $ and $ p_{j+1} $ be three successive non-isolated vertices in pseudo-convex polygon $ P $. There is at least one node in maximum embedded cycle that is mapped to one of these three successive vertices.
\end{lemma}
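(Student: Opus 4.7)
The plan is to argue by contradiction. Suppose none of $p_{j-1}, p_j, p_{j+1}$ hosts a node of the maximum embedded cycle $C$. Because $C$ is drawn planarly with all its nodes sitting on the Jordan curve $\partial P$, the cyclic order of its nodes along $C$ must agree with their cyclic order along $\partial P$. Hence there exist indices $a \le j-2$ and $b \ge j+2$ such that $p_a$ and $p_b$ host two cyclically consecutive nodes of $C$, and the chord $(p_a, p_b)$ is one of the embedded edges of $C$. The strategy is then to replace this single chord by a two-edge path $(p_a, p_x, p_b)$ for some $x \in \{j-1, j, j+1\}$, producing an embedded cycle with one more node and one more edge, which contradicts the maximality of $C$.

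For the replacement to yield a valid polygon embedding I need, for the chosen $x$: (i) both segments $p_a p_x$ and $p_x p_b$ lie inside $P$; (ii) neither of them coincides with an edge of $P$; and (iii) neither of them crosses another edge of $C$. Condition (iii) comes for free, since the chord $(p_a, p_b)$ together with the boundary arc of $P$ from $p_b$ to $p_a$ through $p_{j-1}, p_j, p_{j+1}$ bounds a sub-polygon (pocket) whose interior is disjoint from the remaining edges of $C$, and both candidate segments live inside this pocket. Condition (ii) is easy to arrange: $x=j$ is always admissible, whereas $x=j-1$ fails only when $a=j-2$ and $x=j+1$ fails only when $b=j+2$, so at least one of the three choices is polygon-edge-free.

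The real work, and the main obstacle I anticipate, is condition (i): showing that some admissible $p_x$ is visible in $P$ from both $p_a$ and $p_b$. I plan to exploit pseudo-convexity together with non-isolation through a short case analysis on which of $p_{j-1}, p_j, p_{j+1}$ is reflex (at most one can be, since $P$ has no u-turn). When $p_j$ is convex, I argue that $p_j$ sees both $p_a$ and $p_b$ through the pocket, since walking along $\partial P$ from $p_a$ toward $p_j$ cannot hit two consecutive reflex vertices, ruling out any occluder of the segment $p_a p_j$; the symmetric argument handles $p_b$. When $p_j$ is reflex, pseudo-convexity forces $p_{j-1}$ and $p_{j+1}$ to be convex, the pocket splits at $p_j$ into two halves, and the same obstruction-free argument applied in the appropriate half yields visibility from $p_a$ (resp.\ $p_b$) to whichever of $p_{j-1}, p_{j+1}$ remains admissible under (ii). The non-isolation hypothesis is used precisely to exclude degenerate configurations where the pocket would be so narrow that every candidate insertion is blocked; without it the insertion could fail and the contradiction with maximality would not go through.
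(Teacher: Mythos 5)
Your overall strategy is the same as the paper's: assume no node of the maximum cycle $C$ lands on $p_{j-1}$, $p_{j}$, $p_{j+1}$; locate the nearest cycle-hosting vertices $p_{a}$ and $p_{b}$ bracketing the gap; use planarity to conclude they are joined by an edge of $C$; and splice one of the three skipped vertices into that edge to get a strictly longer cycle, contradicting maximality. You are in fact more explicit than the paper about what the splice requires --- the paper simply asserts that an augmenting path $\varPi$ through one of the three vertices exists and says nothing about why its edges are valid chords.

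The step you yourself flag as the real work, your condition (i) on visibility, is where the argument has a genuine gap. Your justification is that walking along $\partial P$ from $p_{a}$ toward $p_{j}$ ``cannot hit two consecutive reflex vertices, ruling out any occluder of the segment $p_{a}p_{j}$.'' That inference is not valid: a single reflex vertex strictly between $p_{a}$ and $p_{j}$ on the pocket boundary can already occlude the segment $p_{a}p_{j}$. Pseudo-convexity forbids only \emph{adjacent} pairs of reflex vertices, and isolated reflex vertices are precisely what create occlusion; for example, in an orthoconvex plus-shaped polygon (which is pseudo-convex) opposite arm tips are mutually invisible even though no two reflex vertices are adjacent. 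A smaller slip in the same case analysis: among $p_{j-1}, p_{j}, p_{j+1}$ up to \emph{two} can be reflex, namely $p_{j-1}$ and $p_{j+1}$, since those two are not adjacent to each other, so the premise ``at most one can be reflex'' is wrong. To be fair, the paper's own proof does not address visibility at all, so you have correctly identified the crux of the lemma; but you have not closed it, and the specific occlusion argument you propose does not work as stated.
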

\begin{proof}
Suppose that there is a maximum embedded cycle $ C $ such that any node of $ C $ is not mapped to these three successive vertices $ p_{j-1} $, $ p_{j} $ and $ p_{j+1} $. Let $ p_{s} $ be the first previous vertex (first before $ p_{j-1} $) such that a node of $ C $ is mapped to it and $ p_{t} $ be the first next vertex (first after $ p_{j+1} $) such that a node of $ C $ is mapped to it. There must exist a straight-line planar polygon embedded path, as denoted $ \varPi $, between $ p_{s} $ and $ p_{t} $ that is contained at least one of $ p_{j-1} $, $ p_{j} $ or $ p_{j+1} $ without edge crossing with cycle edges except at the nodes $ p_{s} $ and $ p_{t} $. The size of $ \varPi $ is greater than one, if we remove edge $( p_{s} , p_{t}) $ from $ C $ and replace path $ \varPi $ instead of it, we should have a new cycle that its size is greater than size of $ C $. It means $ C $ was not maximum cycle and it is contradiction.
\end{proof}
\begin{cor}
At least one and at most two vertices of every three successive non-isolated vertices in pseudo-convex polygon $ P $ are on a cycle graph with maximum number of edges that is embedded in $ P $. 
\end{cor}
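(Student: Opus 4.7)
The plan is to piggy-back on the immediately preceding lemma, which already delivers the ``at least one'' half of the corollary: among any three successive non-isolated polygon vertices $p_{j-1}, p_j, p_{j+1}$ at least one must host a node of the maximum embedded cycle $C$, otherwise one could insert a detour $\varPi$ through them and enlarge $C$. So the only remaining task is the ``at most two'' half, namely to rule out the possibility that all three of $p_{j-1}, p_j, p_{j+1}$ simultaneously host cycle nodes.

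I would argue this by contradiction. Suppose every one of $p_{j-1}, p_j, p_{j+1}$ is the image of some node of $C$. Invoke the planarity observation already recorded in the preamble to this section: if two cyclically consecutive nodes of $C$ are mapped to polygon vertices $p_a$ and $p_b$, then no cycle node can be mapped to any polygon vertex strictly between $p_a$ and $p_b$ in the counter-clockwise boundary order. Applied to our three polygon-adjacent vertices, this forces the three cycle nodes sitting on $p_{j-1}, p_j, p_{j+1}$ to be three \emph{cyclically consecutive} nodes $c_{i-1}, c_i, c_{i+1}$ of $C$ (any non-consecutiveness would demand an additional cycle node strictly between two of these polygon vertices, but there are no polygon vertices to spare there). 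Then the two cycle edges $(c_{i-1},c_i)$ and $(c_i,c_{i+1})$ are drawn as the straight segments $\overline{p_{j-1}p_j}$ and $\overline{p_j p_{j+1}}$. Since $p_{j-1}, p_j, p_{j+1}$ are successive on the boundary of $P$, these segments are edges of $P$ itself, contradicting the condition $E\cap \Pi = \emptyset$ enforced by the definition of polygon embedding. Hence at most two of the three vertices can appear in $C$, and combined with the lemma the corollary follows.

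The main obstacle I anticipate is the bookkeeping step that identifies the three cycle nodes mapped to $p_{j-1},p_j,p_{j+1}$ as cyclically consecutive in $C$. The argument must invoke the ``out-of-reach'' observation on both sides of each suspected gap: if the nodes on $p_{j-1}$ and $p_{j+1}$ were not both cycle-neighbors of the node on $p_j$, some pair among them would be cyclically consecutive in $C$ with a nonempty arc between them on the polygon boundary that nevertheless hosts a cycle node, violating planarity. Once this cyclic-consecutiveness is pinned down, the contradiction with $E \cap \Pi = \emptyset$ is immediate, and the non-isolatedness hypothesis is not needed for this direction (it is used only for the ``at least one'' half inherited from the lemma).
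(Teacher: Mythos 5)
Your proposal is correct and follows essentially the argument the paper intends: the ``at least one'' half is exactly the preceding lemma, and the ``at most two'' half is the observation already recorded in the preamble of Section~4 (if a cycle node is mapped to $p_j$, then $p_{j-1}$ and $p_{j+1}$ cannot carry cycle nodes, since the out-of-reach/planarity argument forces such neighbours to be cycle-consecutive and the resulting segments would be edges of $P$, violating $E\cap\varPi=\emptyset$). The paper states the corollary without a written proof, and your fleshed-out derivation of the consecutiveness step matches the intended reasoning rather than introducing a new route.
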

\begin{lemma}
\label{le:le05}
There is a straight-line planar embedded cycle graph with maximum size such that all reflex vertices of $ P_{n} $ are on it, where $ P_{n} $ is a pseudo-convex polygon.
\end{lemma}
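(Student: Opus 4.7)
The plan is to begin with any straight-line planar embedded maximum cycle $C$ in $P_n$, whose existence is guaranteed by the discussion preceding the lemma, and to transform $C$ through a finite sequence of \emph{reflex swaps} that monotonically enlarges the set of reflex vertices lying on the cycle while preserving its size. Since $P_n$ has finitely many reflex vertices, termination is immediate once each swap is shown to be valid.

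The swap is defined locally. Let $r = p_{j}$ be a reflex vertex currently not on $C$. By pseudo-convexity both polygon neighbors $p_{j-1}$ and $p_{j+1}$ are convex, so the triple $(p_{j-1}, p_{j}, p_{j+1})$ consists of three consecutive non-isolated vertices (a reflex vertex of a pseudo-convex polygon, having interior angle exceeding $\pi$, meets the non-isolation threshold except in degenerate small polygons, which can be dispatched by direct inspection). The preceding corollary then forces at least one of the two convex neighbors, say $p_{j+1}$, to carry a node $c_i$ of $C$; let $p_s$ and $p_t$ be the vertices hosting the two cycle neighbors of $c_i$. I would re-map $c_i$ from $p_{j+1}$ to $r$, leaving every other node in place: this redirects exactly the two cycle edges $(p_{j+1}, p_s)$ and $(p_{j+1}, p_t)$ to $(r, p_s)$ and $(r, p_t)$, and keeps the size of the cycle unchanged.

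Verifying this swap is the main substantive step. For the new chord $(r, p_s)$ to be valid, it must lie inside $P_n$. I would argue via the triangle $T = r \, p_{j+1} \, p_s$: the side $(r, p_{j+1})$ is a polygon edge, the side $(p_{j+1}, p_s)$ lies in $P_n$ by hypothesis, and because $r$ is reflex the polygon interior wraps around $r$ into $T$, forcing $(r, p_s) \subset P_n$; any polygon edge crossing $(r, p_s)$ would have to enter $T$ through $(p_{j+1}, p_s)$ and leave through $(r, p_{j+1})$ or through $r$ itself, all impossible by simplicity of $P_n$ and by the reflex angle at $r$. Non-crossing with the remaining edges of $C$ is handled by the same triangle argument: any surviving cycle edge crossing $(r, p_s)$ would, by planarity of the original embedding, also cross $(p_{j+1}, p_s)$, a contradiction. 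The chord $(r, p_t)$ is treated symmetrically.

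The hard part will be the boundary sub-case in which $p_s$ or $p_t$ coincides with the other polygon neighbor $p_{j-1}$ of $r$, since then the proposed ``chord'' $(r, p_{j-1})$ is the polygon edge $(r, p_{j-1})$ itself, which is disallowed. In that case I would either perform the swap at $p_{j-1}$ instead, whenever $p_{j-1}$ also hosts a node of $C$, or execute a slightly larger rerouting that absorbs $p_{j-1}$ into the cycle through a different chord available to it, exploiting the corollary's guarantee that not every vertex in $\{p_{j-1}, r, p_{j+1}\}$ can simultaneously lie off $C$. A secondary concern is monotonicity: I must ensure that no later swap ejects a previously inserted reflex vertex, which follows because each swap modifies only the two edges incident to the newly inserted reflex vertex, so the iteration can be ordered so that swaps at distinct reflex vertices touch disjoint edge pairs.
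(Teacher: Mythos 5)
Your proposal follows essentially the same route as the paper's proof: start from a maximum embedded cycle, use pseudo-convexity to see that an uncovered reflex vertex $r=p_j$ has convex neighbours, find a cycle node sitting on one of those neighbours, and re-map that single node onto $r$, rerouting its two incident cycle edges and leaving the size unchanged; then iterate over the reflex vertices. You are in fact more careful than the paper on three points the paper passes over in silence: why one of $p_{j\pm1}$ must host a node (you invoke the corollary explicitly), the degenerate sub-case where a rerouted edge would coincide with the polygon edge $(r,p_{j-1})$, and the monotonicity of the iteration.

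The one substantive worry is your triangle argument for the validity of the new chord $(r,p_s)$. You enumerate the ways a boundary chain could enter the triangle $T=r\,p_{j+1}\,p_s$ and exclude the sides $(r,p_{j+1})$ and $(p_{j+1},p_s)$, but you omit the possibility that a chain of $\partial P$ enters \emph{and} leaves $T$ through the side $(r,p_s)$ itself: a thin ``spike'' of the exterior, whose tip is a single reflex vertex flanked by convex base vertices (hence compatible with pseudo-convexity), can cross the segment $(r,p_s)$ twice without touching the other two sides of $T$. In that configuration $p_s$ is visible from $p_{j+1}$ but not from $r$, and the swap as described fails. To be fair, the paper's own proof performs exactly the same edge replacement without any verification that the new segments are chords of $P$, so your attempt is at least as rigorous as the original; but if you want a complete argument you must either rule out such spikes (which pseudo-convexity alone does not do) or show how to choose a different target vertex for the rerouted edges when $p_s$ or $p_t$ is invisible from $r$.
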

\begin{proof}
Let $ C_{m} $ be a cycle graph with maximum number of edges that is embedded in pseudo-convex polygon $ P_{n} $ with $ n $ vertices. Let $ \{p_{0},p_{1},\dots,p_{n-1}\} $ be vertex set of $ P_{n} $ ordered counter clockwise and $ \{c_{0},c_{1},\dots,c_{n-1}\} $ be node set of $ C_{m} $ ordered counter clockwise. Cycle graph $ C_{m} $ has $ m $ nodes that is mapped to $ m $ vertices of $ P_{n} $. Suppose that $ p_{i} $ is a reflex vertex and any node of $ C_{m} $ is not mapped to it. So, if $ p_{i} $ is a reflex vertex, then $ p_{i-1} $ and $ p_{i+1} $ must be convex vertices, we find it from being pseudo-convexity.  In the first case, let there is a node, as denoted $ c_{j} $, that is mapped to $ p_{i-1} $. So, $ c_{j+1} $ can not mapped to $ p_{i+1} $, because $ p_{i-1} $ and $ p_{i+1} $ is not visible from each other. Let $ c_{j+1} $ is mapped to $ p_{t} $ and Let $ c_{j-1} $ is mapped to $ p_{s} $. Now, replace edges $ \left({p}_{s},{p}_{i-1}\right) $ and $ \left({p}_{i-1},{p}_{t}\right) $ in $ C_{m} $ with $ \left({p}_{s},{p}_{i}\right) $ and $ \left({p}_{i},{p}_{t}\right) $. The cycle graph $ C_{m} $ will contain reflex vertex $ p_{i} $ with the size as same as being before replacement. See figure~\ref{fig:figure09}(a). For every reflex vertices $ p_{v} $ in this situation do same operation till it remain no reflex vertex like that. In the second case, let there is a node, as denoted $ c_{j} $, that is mapped to $ p_{i+1} $. Let $ c_{j+1} $ and $ c_{j-1} $ are mapped to $ p_{s} $ and $ p_{t} $. Now, replace edges $ \left({p}_{s},{p}_{i+1}\right) $ and $ \left({p}_{i+1},{p}_{t}\right) $ in $ C_{m} $ with $ \left({p}_{s},{p}_{i}\right) $ and $ \left({p}_{i},{p}_{t}\right) $. The cycle graph $ C_{m} $ will contain reflex vertex $ p_{i} $ with the size as same as being before replacement. See figure~\ref{fig:figure09}(b). For every reflex vertices $ p_{w} $ in second explained situation do same operation till it remain no reflex vertex like that. Finally, $ C_{m} $ will be contained all reflex vertices on $ P_{n} $, where $ P_{n} $ is a pseudo-convex polygon.
\end{proof}
 \begin{figure}
  \centering
  \includegraphics[width=0.8\linewidth]{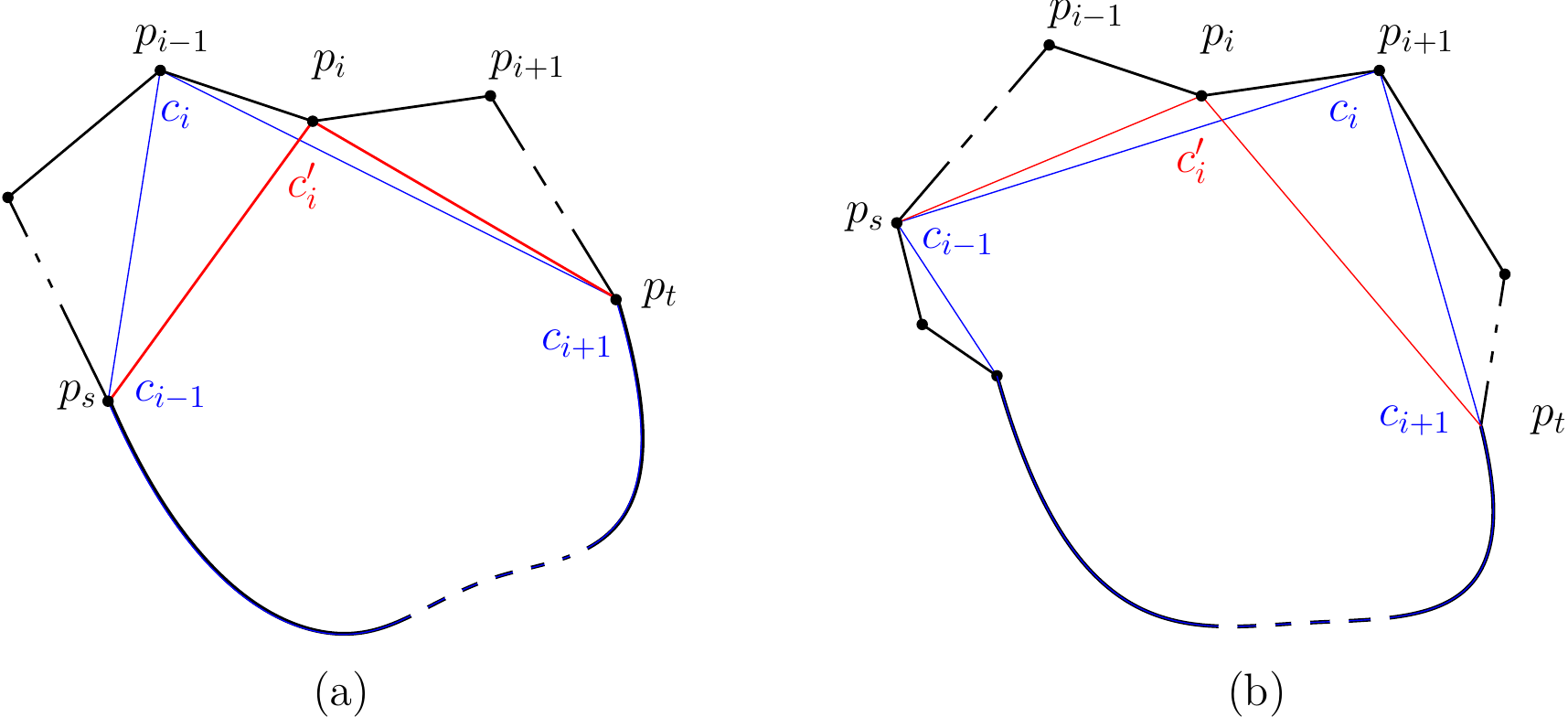}
  \caption{{\small The illumination of the proof}}
  \label{fig:figure09}
  \end{figure}
Let $ C_{m} $ has the property that is explained in lemma~\ref{le:le05}. We provide a linear-time algorithm to find $ C_{m} $. Hence, We have the following theorem.
\begin{theorem}
\label{th:th01}
There is a linear-time algorithm for finding straight-line planar embedding of a cycle graph with maximum size in pseudo-convex polygon $ P_{n} $ with $ n $ vertices such that each node of cycle mapped to a vertex on $ P $.
\end{theorem}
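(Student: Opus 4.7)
The plan is to build the cycle in a single counter-clockwise pass over the polygon boundary, guided by Lemma~\ref{le:le05} and the preceding corollary. Since we already know that some maximum embedded cycle contains every reflex vertex of $P_n$, the algorithm marks all reflex vertices as \emph{on-cycle} first. Because $P_n$ is pseudo-convex, the two polygon-neighbors of every reflex vertex are convex; the earlier observation (that a polygon edge cannot coincide with a cycle edge) then forces those neighbors to stay \emph{off-cycle}. The remaining convex vertices split into maximal runs lying between two reflex vertices, and within each such run the algorithm greedily selects vertices in an alternating fashion while (a) refusing to take either endpoint of the run (they are adjacent to a reflex vertex) and (b) skipping every isolated vertex. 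A cyclic version of the same rule handles the case where $P_n$ has no reflex vertex at all, recovering the procedure of Lemma~\ref{le:le04}.

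To certify that the resulting sequence of marked vertices, connected in the boundary order, is indeed a straight-line planar embedding, I would verify two local conditions at each chord. First, every chord skips at least one polygon vertex by construction, so it is never a polygon edge. Second, the skipped vertex between two consecutive on-cycle vertices is always convex (reflex vertices are never skipped) and is the \emph{only} vertex skipped in pseudo-convex runs, which guarantees that the chord lies in the interior of $P_n$; when the chord straddles a reflex vertex, that reflex vertex is by construction already an endpoint of the chord, so the chord does not cut across the concave notch. Together these give that consecutive chords share only the reflex/convex endpoints, and the polygon boundary traversal order ensures no two chords cross.

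For maximality I would invoke the corollary immediately following Lemma~\ref{le:le05}: in any maximum embedded cycle, every three successive non-isolated vertices of $P_n$ contribute at most two nodes, and every reflex vertex is on the cycle in some optimal solution. The greedy construction saturates both bounds run-by-run between consecutive reflex vertices, and a short accounting argument over these runs shows that its output matches the upper bound $\lfloor (n_{\text{non-iso}})/2\rfloor$ adjusted for the mandatory inclusion of reflex vertices. Finally, since visibility questions reduce to checking the three-vertex pattern \emph{convex--skip--convex} or \emph{reflex--skip--convex}, each check is $O(1)$, and the whole traversal is $O(n)$.

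The main obstacle I expect is the visibility argument for chords that straddle runs of convex vertices near a reflex vertex: one has to show that the chord between, say, the second convex vertex of a run and the reflex vertex ending the run actually lies inside $P_n$, which is where pseudo-convexity (no two consecutive reflex vertices) really earns its keep. The isolated-vertex bookkeeping is a secondary nuisance, because an isolated vertex that happens to be reflex could force a local re-routing; I would handle this by observing that a reflex isolated vertex cannot exist in a pseudo-convex polygon except in degenerate small cases, which can be treated separately in constant time.
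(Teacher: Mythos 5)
Your proposal does not follow the paper's proof --- and that is not in itself a criticism: the paper's own ``proof'' of Theorem~\ref{th:th01} is the convex-polygon traversal of Lemma~\ref{le:le03} repeated essentially verbatim (it still reads ``Because of convexity, the vertices of convex polygon are visible from each others'' and speaks of a path with $m$ edges in a polygon $Q$), so it never actually engages with reflex or isolated vertices. Your sketch is closer to what a correct argument would need, but it has a genuine gap at its core: the containment/planarity claim for the chords. You assert that a chord joining two on-cycle vertices that skips a single convex vertex ``lies in the interior of $P_n$.'' That would require the skipped convex vertex to be an ear tip, which pseudo-convexity does not guarantee: a reflex--convex--reflex spike is permitted in a pseudo-convex polygon (its two reflex vertices are not adjacent to each other), and such a spike located elsewhere on the boundary can intrude into the triangle spanned by three consecutive vertices, so the chord can cross a polygon edge even though the local convex/reflex pattern looks harmless. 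This is precisely why the paper has to introduce isolated vertices (Figure~\ref{fig:figure08}(c)): visibility in a pseudo-convex polygon is not a local property, so your claimed $O(1)$ per-chord test --- and with it the $O(n)$ running time --- is not established.

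The maximality argument is also only asserted, not carried out. The corollary gives ``at least one and at most two of every three successive non-isolated vertices,'' but converting that into a tight count requires reconciling three constraints that can conflict: every reflex vertex must be taken (Lemma~\ref{le:le05}), both polygon-neighbors of a taken vertex must be skipped, and isolated vertices must never be taken. Two reflex vertices at boundary distance two already force a parity on the alternation in the adjoining runs, and you need to show the greedy choice never loses an edge to such parity clashes, and that the achieved count equals the optimum rather than merely respecting the local upper bound. Finally, the statement that a reflex isolated vertex ``cannot exist except in degenerate small cases'' is doing real work --- it dismisses a potential contradiction between the rule ``take every reflex vertex'' and the rule ``never take an isolated vertex'' --- and it is left unproved.
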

\begin{proof}
Let $ \{p_{0},p_{1},\dots,p_{n-1}\} $ be vertices set of $ P_{n} $ ordered counter clockwise. As lemma~\ref{le:le05}, there is a straight-line planar embedded cycle graph with maximum size such that all reflex vertices of $ P_{n} $ are on it. We provide an algorithm to find such a cycle, as denoted $ C_{m} $, if $ m $ be number of its edges. We will find $ m $ after tracing algorithm. Therefore, all reflex vertices of $ P_{n} $ is selected to being in $ C_{m} $.

Traverse vertices of $ P $ in following order:
\begin{enumerate}
\item Traverse from $ {q}_{m+2} $ to $ {q}_{m} $.
\item Traverse from $ {q}_{m-i} $ to $ {q}_{i} $ {\small (for $ 0\leq i<\frac{m}{2} $)} if current location index is greater than $ \frac{m}{2} $.
\item Traverse from $ {q}_{i} $ to $ {q}_{m-i-1} $ {\small (for $ 0\leq i< \frac{m}{2}-1 $)} if current location index is less than $ \frac{m}{2} $.
  \end{enumerate}
 Because of convexity, the vertices of convex polygon are visible from each others. Therefore, according to the algorithm, path $ P $ has exactly $ m $ edges which are all planar chords of $ Q $. Time complexity is $ \theta \left(m\right) $, see figure~\ref{fig:figure09}.
\end{proof}

\section{Conclusion}
In this paper, we present a linear-time algorithms for embedding the maximum-size path and cycle in a convex polygon with $ n $ vertices. After that, we use a similar algorithm for pseudo-convex polygons which have no u-turn vertex and time complexity is remained linear. But, time complexity of the problem for simple polygon remains open. If it is NP-hard, finding a fix parameter algorithm is interesting for this problem.

\bibliographystyle{plain}
\bibliography{bibfile}

\begin{filecontents}{internalbibfile.bib}
@article{di1994algorithms,
  title={Algorithms for drawing graphs: an annotated bibliography},
  author={Di Battista, Giuseppe and Eades, Peter and Tamassia, Roberto and Tollis, Ioannis G},
  journal={Computational Geometry},
  volume={4},
  number={5},
  pages={235--282},
  year={1994},
  publisher={Elsevier}
}
@article{wagner1936bemerkungen,
  title={Bemerkungen zum Vierfarbenproblem.},
  author={Wagner, Klaus},
  journal={Jahresbericht der Deutschen Mathematiker-Vereinigung},
  volume={46},
  pages={26--32},
  year={1936}
}
@article{stein1951convex,
  title={Convex maps},
  author={Stein, Sherman K},
  journal={Proceedings of the American Mathematical Society},
  volume={2},
  number={3},
  pages={464--466},
  year={1951},
  publisher={JSTOR}
}
@inproceedings{schnyder1990embedding,
  title={Embedding planar graphs on the grid},
  author={Schnyder, Walter},
  booktitle={SoDA},
  volume={90},
  pages={138--148},
  year={1990}
}
@article{de1990draw,
  title={How to draw a planar graph on a grid},
  author={De Fraysseix, Hubert and Pach, J{\'a}nos and Pollack, Richard},
  journal={Combinatorica},
  volume={10},
  number={1},
  pages={41--51},
  year={1990},
  publisher={Springer}
}
@article{cabello2006planar,
  title={Planar embeddability of the vertices of a graph using a fixed point set is NP-hard.},
  author={Cabello, Sergio},
  journal={J. Graph Algorithms Appl.},
  volume={10},
  number={2},
  pages={353--363},
  year={2006}
}
@inproceedings{castaneda1996straight,
  title={Straight Line Embeddings of Planar Graphs on Point Sets.},
  author={Casta{\~n}eda, Netzahualcoyotl and Urrutia, Jorge},
  booktitle={CCCG},
  pages={312--318},
  year={1996},
  organization={Citeseer}
}
@article{bose2002embedding,
  title={On embedding an outer-planar graph in a point set},
  author={Bose, Prosenjit},
  journal={Computational Geometry},
  volume={23},
  number={3},
  pages={303--312},
  year={2002},
  publisher={Elsevier}
}
@article{pach1991embedding,
  title={Embedding a planar triangulation with vertices at specified points},
  author={Pach, J{\'a}nos and Gritzmann, P and Mohar, B and Pollack, Richard},
  journal={American Mathematical Monthly},
  volume={98},
  number={DCG-ARTICLE-2008-010},
  pages={165--166},
  year={1991}
}
@article{bose1997optimal,
  title={Optimal algorithms to embed trees in a point set},
  author={Bose, Prosenjit and McAllister, Michael and Snoeyink, Jack},
  journal={Journal of Graph Algorithms and Applications},
  volume={1},
  number={2},
  pages={1--15},
  year={1997}
}
@inproceedings{durocher2012hardness,
  title={On the hardness of point-set embeddability},
  author={Durocher, Stephane and Mondal, Debajyoti},
  booktitle={International Workshop on Algorithms and Computation},
  pages={148--159},
  year={2012},
  organization={Springer}
}
@article{nishat2012point,
  title={Point-set embeddings of plane 3-trees},
  author={Nishat, Rahnuma Islam and Mondal, Debajyoti and Rahman, Md Saidur},
  journal={Computational Geometry},
  volume={45},
  number={3},
  pages={88--98},
  year={2012},
  publisher={Elsevier}
}
@inproceedings{moosa2011improved,
  title={Improved algorithms for the point-set embeddability problem for plane 3-trees},
  author={Moosa, Tanaeem M and Rahman, M Sohel},
  booktitle={International Computing and Combinatorics Conference},
  pages={204--212},
  year={2011},
  organization={Springer}
}
@article{Hoseinzadegan,
  author = {Hosseinzadegan, Majid and Bagheri Alireza},
  title = {Optimal point-set embedding of wheel graph and a sub-class of 3-trees},
  journal = {Discrete Applied Mathematics},
  year={2015},
  publisher={Elsevier}
}
@article{ikebe1994rooted,
  title={The rooted tree embedding problem into points in the plane},
  author={Ikebe, Yoshiko and Perles, Micha A and Tamura, Akihisa and Tokunaga, Shinnichi},
  journal={Discrete \& Computational Geometry},
  volume={11},
  number={1},
  pages={51--63},
  year={1994},
  publisher={Springer}
}
@inproceedings{yang2009beyond,
  title={Beyond trilateration: On the localizability of wireless ad-hoc networks},
  author={Yang, Zheng and Liu, Yunhao and Li, X-Y},
  booktitle={INFOCOM 2009, IEEE},
  pages={2392--2400},
  year={2009},
  organization={IEEE}
}
@article{halin1969theorem,
  title={A theorem on n-connected graphs},
  author={Halin, R},
  journal={Journal of Combinatorial Theory},
  volume={7},
  number={2},
  pages={150--154},
  year={1969},
  publisher={Elsevier}
}
@article{cornuejols1983halin,
  title={Halin graphs and the travelling salesman problem},
  author={Cornu{\'e}jols, G{\'e}rard and Naddef, Denis and Pulleyblank, William R},
  journal={Mathematical programming},
  volume={26},
  number={3},
  pages={287--294},
  year={1983},
  publisher={Springer}
}
@article{gallian2014dynamic,
  title={A dynamic survey of graph labeling},
  author={Gallian, Joseph A},
  journal={The Electronic journal of combinatorics},
  volume={16},
  year={2014},
  publisher={Citeseer}
}
@article{di2009point,
  title={Point-set embeddings of trees with given partial drawings},
  author={Di Giacomo, Emilio and Didimo, Walter and Liotta, Giuseppe and Meijer, Henk and Wismath, Stephen K},
  journal={Computational Geometry},
  volume={42},
  number={6},
  pages={664--676},
  year={2009},
  publisher={Elsevier}
}
@incollection{de2000computational,
  title={Computational geometry},
  author={De Berg, Mark and Van Kreveld, Marc and Overmars, Mark and Schwarzkopf, Otfried Cheong},
  booktitle={Computational geometry},
  pages={1--17},
  year={2000},
  publisher={Springer}
}
@article{nandy2010recognition,
  title={Recognition of largest empty orthoconvex polygon in a point set},
  author={Nandy, Subhas C and Mukhopadhyaya, Krishnendu and Bhattacharya, Bhargab B},
  journal={Information Processing Letters},
  volume={110},
  number={17},
  pages={746--752},
  year={2010},
  publisher={Elsevier}
}
@book{o1987art,
  title={Art gallery theorems and algorithms},
  author={O'rourke, Joseph},
  volume={57},
  year={1987},
  publisher={Oxford University Press Oxford}
}
@inproceedings{chambers2010drawing,
  title={Drawing graphs in the plane with a prescribed outer face and polynomial area},
  author={Chambers, Erin W and Eppstein, David and Goodrich, Michael T and L{\"o}ffler, Maarten},
  booktitle={International Symposium on Graph Drawing},
  pages={129--140},
  year={2010},
  organization={Springer}
}
@inproceedings{mchedlidze2013drawing,
  title={Drawing Planar Graphs with a Prescribed Inner Face.},
  author={Mchedlidze, Tamara and N{\"o}llenburg, Martin and Rutter, Ignaz},
  booktitle={Graph Drawing},
  pages={316--327},
  year={2013}
}
@article{bagheri2010planar,
  title={Planar straight-line point-set embedding of trees with partial embeddings},
  author={Bagheri, Alireza and Razzazi, Mohammadreza},
  journal={Information Processing Letters},
  volume={110},
  number={12-13},
  pages={521--523},
  year={2010},
  publisher={Elsevier}
}
@article{sepehri2010point,
  title={POINT-SET EMBEDDING OF TREES INSIDE A SIMPLE POLYGON WITH THE MINIMUM NUMBER OF BENDS},
  author={Sepehri, Ali and Bagheri, Alireza},
  journal={Iranian Communication and Information Technology},
    volume={2},
    number={3-4},
    pages={1--7},
    year={2010},
  publisher={Scientific Information Database}
}
\end{filecontents}
\end{document}